\documentclass[journal]{IEEEtran}
\usepackage{amsmath,mathtools,amsfonts,amssymb,amsthm}
\usepackage{float}
\usepackage{algpseudocode}
\usepackage[ruled,vlined]{algorithm2e}
\usepackage{array}
\usepackage[caption=false,font=normalsize,labelfont=sf,textfont=sf]{subfig}
\usepackage{textcomp}
\usepackage{stfloats}
\usepackage{url}

\usepackage{url}
\usepackage{xcolor}
\usepackage{verbatim}
\usepackage{graphicx}
\usepackage{soul}
\def\BibTeX{{\rm B\kern-.05em{\sc i\kern-.025em b}\kern-.08em
    T\kern-.1667em\lower.7ex\hbox{E}\kern-.125emX}}
\usepackage{balance}
\usepackage{multirow}
\usepackage{booktabs} % To thicken table lines

\usepackage{caption}                         % <--- added
 \usepackage{multirow}
 \usepackage{threeparttable}
\usepackage{mathtools}  % For \coloneqq and other math tools (optional if not using \coloneqq)
\PassOptionsToPackage{linesnumbered,ruled,vlined}{algorithm2e}
\usepackage{tablefootnote}
\usepackage[numbers,sort&compress]{natbib}

\newtheoremstyle{sltheorem}
{}                % Space above
{}                % Space below
{}        % Theorem body font % (default is "\upshape")
{10pt}                % Indent amount
{\bfseries}       % Theorem head font % (default is \mdseries)
{:}               % Punctuation after theorem head % default: no punctuation
{ }               % Space after theorem head
{}                % Theorem head spec
\theoremstyle{sltheorem}
\newtheorem{assumption}{Assumption}
\newtheorem{theorem}{Theorem}
\newtheorem{definition}{Definition}

\begin{document}
\raggedbottom

\title{Duration-Aware Ramp Adequacy Screening}

\author{Qian Zhang, Aidan Looney, Chao Tian, Xu Andy Sun,  Le Xie
\thanks{Qian Zhang, Aidan Looney, and Le Xie are with the School of Engineering and Applied Sciences, Harvard University, USA. Chao Tian is with the Department of Electrical and Computer Engineering, Texas A\&M University, USA. Xu Andy Sun is with the Sloan School of Management and the MIT Energy Initiative, Massachusetts Institute of Technology, USA. (correspondence e-mail: qianzhang@g.harvard.edu).}
}

\markboth{Submitted to the IEEE Transactions on Energy Markets, Policy and Regulation}%
{Zhang \MakeLowercase{\textit{et al.}}: Duration-Aware Ramp Adequacy Screening}

\maketitle

\begin{abstract}
Ramp products are widely used in regional electricity markets to procure intertemporal flexibility in anticipation of net demand changes. However, the design of such ramp products often lacks a clear specification of ramping duration, potentially leading to infeasible dispatch solutions. This paper develops a duration-aware ramp adequacy screening method that evaluates whether the currently committed and dispatched fleet can meet the anticipated net-demand ramp requirement across different durations. A negative ramp adequacy margin identifies an insufficient-duration set in which the fleet lacks adequate ramp capability. Evaluating this margin across duration supports product-duration selection, while tracking it over time provides a metric for assessing ramp adequacy under different products and dispatch policies. We further formulate rolling-window ramp-reserve procurement with horizon-dependent forecast uncertainty and show that a product can affect ramp capability beyond its designated duration through changes in dispatch positioning. Building on this framework, we develop a forecast-free ramp-reserve scarcity dispatch policy that prioritizes resources according to their remaining ramp-up durations and, in the transmission-unconstrained setting, achieves the same minimum operational security loss as a perfect-foresight benchmark. Studies on a 10-generator system and a 2751-bus synthetic Texas grid demonstrate the value of the proposed framework for early detection of ramp scarcity, product design and evaluation, and screening-guided emergency dispatch.
\end{abstract}

\begin{IEEEkeywords} Ramp adequacy, ramp capability product, ramp-product duration, power system flexibility
\end{IEEEkeywords}

\section{Introduction}

Increasing renewable penetration, particularly solar generation, has made ramp capability a growing operational concern for many regional transmission organization. During sustained net-load transitions, the binding limitation may be the deliverable ramp of the committed fleet rather than its installed generation capacity. If the system cannot follow an upward change in net demand, the consequences may include scarcity prices, emergency actions, and ultimately load shedding \cite{thatte2014analysis,Ela2008}.

There are broadly two ways to improve ramp adequacy. The first is to expand physical system flexibility through investments in resources such as energy storage, fast-start generation, and demand response \cite{kargarian2016multi,shahmohammadi2018role}. The second is to make better use of the flexibility already available in the committed fleet through improved dispatch policies and market mechanisms \cite{wang2013two,gu2016stochastic}. The latter is particularly important in operations, where operators often need to manage an approaching ramp event primarily using resources that are already online. This paper focuses on the latter operational challenge.

To preserve flexibility for net-demand inter-temporal variations during real-time operation, look-ahead dispatch (LAD) and ramp reserve products are two widely used approaches. Multi-interval LAD explicitly accounts for future \emph{expected} ramping needs, while ramp products reserve upward and downward capability to address both expected net-demand \emph{variability} and forecast \emph{uncertainty}. For example, Midcontinent Independent System Operator (MISO) procures a 10-minute ramp capability product that covers both expected ramping needs and forecast uncertainty \cite{wang2016ramp,miso2022ramp}, while California ISO (CAISO) operates a flexible ramping product in conjunction with LAD \cite{caiso2016flexible}. Prior research has explored improved ramp-product formulations, multi-duration product designs, uncertainty-aware requirements, and related market refinements \cite{wang2016enhancing,wang2014flexible}. Forecast uncertainty has also motivated stochastic, robust, and chance-constrained approaches to dispatch and reserve procurement \cite{lorca2014adaptive,gu2016stochastic,zhang2024efficient}.

Despite these developments, two practical questions remain. First, \emph{which ramp-product durations are needed given the current commitment and dispatch state, together with the anticipated net-demand trajectory?} A fixed 10-, 30-, or 60-minute product describes a procurement horizon, but it does not reveal whether the underlying shortage is primarily a short-duration ramp-rate problem or a longer-duration capacity headroom problem. Moreover, the capability available at a given duration depends on the current dispatch state. A product that is adequate when cleared may become less effective as subsequent economic dispatch moves generators toward their capacity limits \cite{qian2025}.

Second, \emph{what should the operator do when ramp-product procurement does not provide sufficient margin?} Increasing reserve procurement can preserve additional flexibility, but its effectiveness is inherently limited. During severe ramp reserve scarcity, the general idea of switching operating objectives under stressed conditions dates back to the preventive, emergency, and restorative operating states introduced in \cite{liacco1967adaptive}. While emergency controls have been studied for many power-system security problems, relatively little attention has been given to dispatch policies specifically designed to preserve ramp capability during an emerging ramp shortage. 

This paper addresses both questions through a \emph{duration-aware ramp adequacy screening} framework. We define the remaining ramp-up duration of a resource as the length of time for which it can continue ramping at its maximum rate before reaching capacity. Aggregating these resource states gives the fleet capability curve $D_t(k)$ as a function of ramp duration $k$. Comparing this curve with the forward ramp requirement $R_t(k)$ yields the ramp adequacy margin:
\begin{equation*}
M_t(k)=D_t(k)-R_t(k).
\end{equation*}
A negative margin identifies durations at which the committed fleet lacks sufficient endpoint ramp capability. Evaluating $M_t(k)$ across duration identifies the insufficient-duration set and supports ramp-product duration selection, while tracking the same metric over time evaluates how ramp adequacy evolves with dispatch, procurement, and deployment.

The remaining-duration perspective also motivates a new dispatch strategy, \emph{ramp-reserve scarcity} (RS) dispatch, which preserves future ramp capability by prioritizing resources according to their remaining ramp-up durations. We show that, without transmission constraints, RS dispatch achieves the same minimum operational security loss as a perfect-foresight benchmark, establishing a fundamental security limit for ramp-product designs. We further extend this principle to transmission-constrained systems through an implementation compatible with conventional security constrained economic dispatch (SCED).

The main contributions of this paper are threefold. First, we develop a duration-aware ramp adequacy screening framework based on $M_t(k)$ that identifies insufficient-duration sets by comparing fleet ramp capability with the anticipated ramp requirement across duration. Second, we use the remaining-duration representation to explain how ramp-product duration, forecast uncertainty, and subsequent dispatch jointly determine post-clearing ramp capability. Third, we establish the forecast-free security limit of RS dispatch and develop a network-constrained implementation that can be activated as an emergency operating policy. Case studies on a 10-generator system and a 2751-bus synthetic Texas grid demonstrate these findings.

\section{Preliminaries} \label{sec:pr}

\subsection{System Description}

We denote by $\mathcal{N}  \doteq \{G_1, G_2, ..., G_N\}$ the set of dispatchable resources available in the system. Let $\overline{g}_i$ and $\overline{r}_i$ denote the maximum power capacity and ramp-up capability of resource $G_i$, respectively. For simplicity, all resources are assumed to have zero minimum power output, i.e. $\underline{g}_i = 0$, and symmetric ramping capabilities, i.e. $\underline{r}_i = \overline{r}_i$. 
This paper mainly focuses on system operation after the unit commitment decisions have been determined, over the time horizon $\mathcal{T}:= \{1,2,..,T\}$.

Given the \emph{net} load profile $\{d(t):t \in \mathcal{T} \}$, the decision maker determines an appropriate dispatch policy $\mathcal{G}$ that accounts for both economic efficiency and operational security. Let $g_i(t)$ be the power output of resource $i$ at time $t$. Two types of hard constraints are considered in this paper: power capacity constraints (\ref{pc}) and ramp capability constraints (\ref{rc}).
\begin{subequations} \label{hardconstraints}
\begin{align}
0 \leq g_i(t) \leq \overline{g}_i,  \label{pc} \\
-\overline{r}_i \leq g_i(t) - g_i(t-1) \leq \overline{r}_i.  \label{rc}
\end{align}
\end{subequations}

Consistent with standard market-clearing models, the power-balance constraints are relaxed to \emph{soft} constraints at the dispatch timescale, where at each dispatch interval an energy shortfall is represented by a non-negative load-shedding slack variable, denoted $s(t)$. The power-balance constraint is relaxed to (\ref{softpower}):

\begin{equation} \label{softpower}
d(t) - s(t)  = \sum_i^N g_i(t).
\end{equation}
Let $\rho_s$ denote the penalty associated with load shedding $s(t)$. Economically, $\rho_s$ plays the role of the scarcity price, set to the cost of emergency action or to the value of lost load (VoLL). In practice, the ISO first dispatches expensive offline fast-start resources before curtailing firm load. For example, MISO sets the administrative penalty for fast-start energy to about \$2{,}500/MWh \cite{miso2022ramp,wang2016ramp}, roughly two orders of magnitude above the cost of conventional resources, while firm curtailment is valued at a VoLL of \$10{,}000/MWh \cite{miso2024scarcity}.

\emph{Remark:} In the security-constrained implementation, network flow limits are represented using the standard power-transfer distribution factor (PTDF) formulation in the synthetic Texas case study (Section \ref{sec:texas}). Most of the analytical development that follows omits network constraints to better illustrate the value of ramping reserve procurement.

\subsection{Operational Security Loss}

\begin{definition} \label{reliabityloss}
[Operational Security Loss] Given the realized net load profile ${d(t)}$ and dispatch interval $\Delta t$, the operational security loss associated with a dispatch policy $\mathcal{G}$ is defined as the total amount of demand shedding over the time horizon $\mathcal{T}$, i.e.,
\begin{equation} \label{totalloss}
L(\mathcal{G},{d(t)}) := \sum_{t \in \mathcal{T}}  \max\{0, d(t)-\sum_i^N g_i(t) \} \Delta t.
\end{equation}
\end{definition}

The analysis primarily focuses on operational security losses caused by ramping reserve scarcity. As the penetration of renewable resources increases, ramping reserve needs generally arise from two sources: net load \emph{variation} and net load forecast \emph{uncertainty}. Depending on the market design, different ISOs may define ramping reserve products to address only forecast uncertainty or to accommodate both sources \cite{qian2025}.

\begin{assumption}[Forecasting Error]
\label{error}
Let $\hat{d}_t(t+\tau)$ be the forecast made at time $t$ for the net demand at time $t+\tau$. The current-interval value is assumed to be known:
\begin{equation} \label{firsterror}
\hat{d}_t(t)=d(t).
\end{equation}
For $\tau>0$, the forecast error satisfies
\begin{equation}
\hat{d}_t(t+\tau)-d(t+\tau)\sim\mathbb{P}_\tau,
\end{equation}
where $\mathbb{P}_\tau$ depends on the forecasting method, forecast horizon, and information available at time $t$. Because forecast uncertainty generally increases with the look-ahead horizon, ramping reserve procurement is typically implemented on a rolling-window basis.
\end{assumption}

\subsection{Rolling-Window Ramp-Reserve Procurement} \label{sec:rolling}

Ramp products provide a procurement-based mechanism for preserving capability to meet selected future ramping requirements. Rather than scheduling the full future energy trajectory, a duration-$W$ product reserves the capability needed to cover the forecast net-demand change over that horizon \cite{qian2025}. In practice, procurement is implemented on a rolling-window basis. As illustrated in Fig.~\ref{fig:rolling}, the system re-evaluates and procures 5- and 10-minute ramp products at every 5-minute dispatch interval.

\begin{figure}[H]
\centering
\includegraphics[width=\columnwidth]{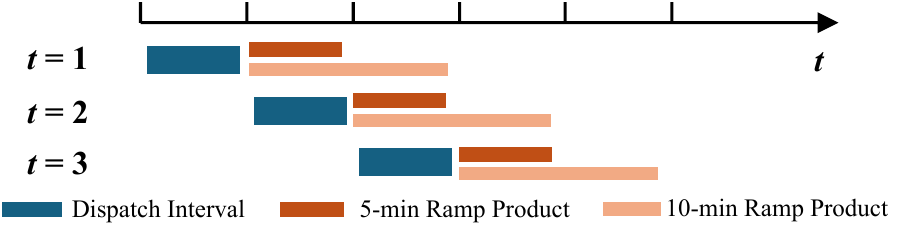}
\caption{Rolling-window ramp-reserve procurement.}
\label{fig:rolling}
\end{figure}

Let $\mathcal{W}=\{0,\ldots,W\}$. A duration-$W$ ramp product procures upward and downward capability that can be delivered from the current dispatch state over $W$ intervals. The future net-demand uncertainty is represented by
$\left[ \hat d_t(t+W)-\delta(W),\hat d_t(t+W)+\delta(W)\right].$
The associated dispatch policy is denoted by
$\mathcal{G}^{\mathrm{ED}_t^{W}}$, with unit-specific constraints applied to
$i\in\mathcal N$.

\begingroup\footnotesize
\begin{equation}
\label{eq:product}
\begin{aligned}
\min \quad &
\sum_{i=1}^N c_i g_i(t)
+\rho_s\big(s(t) + s(t+W)\big)
\\
\text{s.t.}\quad
&
-\underline r_i
\leq g_i(t)-g_i(t-1)
\leq \overline r_i,\quad 0 \leq g_i(t)
\leq \overline g_i,
\\
&
0\leq r_i^+(t+W)\leq W\overline r_i,
\quad
g_i(t)+r_i^+(t+W)\leq\overline g_i,
\\
&
0\leq r_i^-(t+W)\leq W\overline r_i,
\quad
g_i(t)-r_i^-(t+W)\geq0,
\\
&
\sum_{i=1}^N r_i^+(t+W)
\geq
\hat d_t(t+W)+\delta(W)-s(t+W)
-\sum_{i=1}^N g_i(t),
\\
&
\sum_{i=1}^N r_i^-(t+W)
\geq
\sum_{i=1}^N g_i(t)
-\hat d_t(t+W)+\delta(W)+s(t+W),
\\
&
\sum_{i=1}^N g_i(t)+s(t)=d(t), \quad s(t) \geq 0, \ s(t+W) \geq0.
\end{aligned}
\end{equation}
\endgroup

The upward and downward requirements are measured directly from the current served load to the upper and lower bounds of the forecast uncertainty interval at $t+W$ after subtracting the shedding quantity $s(t+W)$, penalized by $\rho_s$. The decision variables $r_i^+(t+W)$ and $r_i^-(t+W)$ therefore represent the upward and downward capability reserved from the current dispatch state and
deliverable over duration $W$. 

In practice, insufficient downward ramp capability is generally observed less frequently and is typically less consequential than upward scarcity, as excess downward imbalance can often be managed through actions such as renewable curtailment. We therefore do not focus on downward scarcity in this study. At each subsequent clearing interval, the forecast, uncertainty band, ramp-product requirements, and current dispatch state are updated. For combined-duration configurations, \eqref{eq:product} is imposed at each selected duration.

\section{Duration-Aware Ramp Adequacy Screening} \label{sec:screening}

Duration-aware ramp adequacy screening evaluates whether the current fleet can support an anticipated net-demand ramp across different durations. We compare fleet ramp capability with anticipated net-load changes and inform ramp-product selection and operational monitoring.

\subsection{Remaining Ramp-Up Duration}

Ramp rate and capacity headroom both determine sustained ramp deliverability. Their ratio determines how long a resource can continue increasing output before reaching its upper limit, which motivates the definition of \emph{remaining ramp-up duration}

\begin{definition}[Remaining Ramp-Up Duration] \label{defin2}
The remaining ramp-up duration $l_i(t)$ is the time for which a resource $i$ can continue at its maximum upward ramp rate before reaching its capacity at time $t$:
\begin{equation} \label{remain}
l_i(t) = \frac{\overline{g}_i-g_i(t)}{\overline{r}_i}.
\end{equation}
\end{definition}

The remaining ramp-up duration $l_i(t)$ captures this effect by measuring how long a committed generator can continue ramping upward from its current output before reaching its capacity limit. Two resources with the same ramp rate may have very different remaining durations if they occupy different dispatch positions. To characterize how this state evolves with dispatch, let $u_i(t)\in[-1,1]$ denote the normalized ramp instruction during interval $t$, such that
$ g_i(t)=g_i(t-1)+u_i(t)\overline{r}_i$.
Substituting this expression into Definition~\ref{defin2} gives
\begin{equation} \label{eq:lrecur}
l_i(t)=l_i(t-1)-u_i(t).
\end{equation}
Thus, an upward dispatch instruction reduces the remaining duration, whereas a downward instruction restores it. Under the capacity constraints, $0\leq l_i(t)\leq \overline{l}_i$, where $\overline{l}_i=\overline{g}_i/\overline{r}_i$. Consequently, the vector $\{l_i(t)\}_{i=1}^N$ records how the realized dispatch trajectory has positioned the fleet for a future ramp event. For energy-limited resources, the remaining ramp-up duration is additionally bounded by the available energy headroom.

\subsection{Fleet Capability and Ramp Requirement Curves}

Consider a ramp event beginning at time $t$. Suppose that the system is initially balanced, $\sum_{i=1}^N g_i(t)=d(t)$, and focusing on ramp-up period when net demand is nondecreasing over $k= \{ 1,\dots,K\}$. Aggregating the resource-level states gives the \emph{fleet ramp capability curve}:
\begin{equation} \label{eq:capcurve}
D_t(k) \coloneqq \sum_{i=1}^N \overline{r}_i \min \{l_i(t),k\}.
\end{equation}

The forward change in net demand defines the corresponding \emph{ramp requirement curve}:
\begin{equation} \label{eq:reqcurve}
R_t(k) \coloneqq d(t+k)-d(t).
\end{equation}

The contribution of resource $i$ to $D_t(k)$ is limited by two quantities: it can increase at no more than $\overline{r}$ per interval, and it can sustain that increase for no more than $l_i(t)$ remaining duration. Thus, $D_t(k)$ is the maximum increase in fleet output, relative to time $t$, that can be delivered by time $t+k$. The curve is nondecreasing and discretely concave in duration: its short-duration slope is the aggregate ramp rate of resources that still have headroom, while its long-duration limit is the fleet's available capacity headroom. 

Equation~\eqref{eq:reqcurve} uses the realized net-demand path for retrospective evaluation. In real-time screening, the operator can replace $d(t+k)$ with the forecast-adjusted upper trajectory $\hat d_t(t+k)+\delta(k)$, consistent with the uncertainty treatment in \eqref{eq:product}. This distinction changes the information supplied to the requirement curve but not the construction of the fleet capability curve.

\subsection{Ramp Adequacy Margin}

The difference between the capability and requirement curves defines the paper's main screening metric. The \emph{ramp adequacy margin} is
\begin{equation} \label{eq:margin}
M_t(k)\coloneqq D_t(k)-R_t(k).
\end{equation}
A negative margin indicates that the ramp requirement over duration $k$ exceeds the maximum increase deliverable from the current fleet state. The corresponding shortfall is $-M_t(k)$. A nonnegative margin indicates sufficient endpoint capability over duration $k$, but does not by itself guarantee that the full net-demand trajectory is feasible under exact power balance.

The set
\begin{equation} \label{eq:set}
\mathcal{B}_t\coloneqq \{k:M_t(k)<0\}
\end{equation}
is the \emph{insufficient-duration set} associated with the system state at time $t$. Its location identifies the exposed durations, while the depth of the negative margin quantifies the corresponding ramp capability shortfall.

Because the resource states evolve according to \eqref{eq:lrecur}, the capability curve and the ramp adequacy margin must be recomputed as dispatch changes. Thus, $M_t(k)$ varies along two dimensions. Variation across $k$ identifies the durations that are exposed at a given time, while variation across $t$ shows when a selected duration becomes exposed or recovers. This state dependence distinguishes the proposed screen from static measures based only on installed ramp rates or procured reserve awards.

The ramp adequacy margin is therefore a screening metric and does not necessarily ensure complete feasibility. Actual trajectory feasibility also depends on how ramping is allocated among resources over time, since dispatching a resource less in an earlier interval may preserve its ramping capability for a later interval. This distinction motivates the duration-preserving dispatch policy developed in Section~\ref{sec:positioning}.

\subsection{Two-Axis Screening for Operations and Planning}
\label{sec:implications}

The ramp adequacy margin can be evaluated along two complementary
dimensions. Screening across duration at a fixed time supports
operational decisions, while screening across time at a fixed
duration supports planning and comparisons of dispatch policies
or ramp products.

\subsubsection{Fixed-Time Duration Screening for System Operations}

Fixed-time duration screening is intended primarily for day-ahead
and short-term operations. Once unit commitment and the initial
dispatch are established, the operator fixes the screening time
$t_0$ and evaluates $M_{t_0}(k)$ across ramp durations. Equivalently,
the fleet capability curve $D_{t_0}(k)$ is compared with the forward
ramp requirement curve $R_{t_0}(k)$.

The resulting insufficient-duration set $\mathcal{B}_{t_0}$
identifies durations that cannot be supported by the current fleet
state, while the magnitude of the negative margin quantifies the
corresponding capability shortfall. These results identify which
durations require additional protection and inform ramp reserve
procurement, redispatch, or activation of other flexible resources.
The screen can be updated when the dispatch state or net-demand
forecast changes materially during operation.

\subsubsection{Fixed-Duration Temporal Screening for Planning and Evaluation}

Fixed-duration temporal screening supports planning and
retrospective evaluation. For a selected duration $\bar{k}$,
the margin $M_t(\bar{k})$ is evaluated over time using historical
operating data or simulated system trajectories. Repeating this
analysis for several durations reveals how short-, medium-, and
long-duration ramp capability changes with system conditions.

This temporal view provides a common metric for comparing dispatch
policies, ramp reserve products, and procurement strategies.
For example, planners can examine how frequently a duration becomes
insufficient, how long the deficiency persists, and how the margin
changes after introducing a different product or dispatch policy.
These comparisons identify which approaches improve ramp capability,
at which durations, and under which operating conditions, informing
product selection and design before deployment in system operations.

\section{Ramp Products: Operating Mechanism and Fundamental Limits} \label{sec:limits}

Section~\ref{sec:screening} identifies an exposed duration band from the current dispatch state, but does not determine the best achievable security outcome. This section defines that outcome with a perfect-foresight benchmark, then identifies the dispatch that preserves ramp capability, and finally establishes a deterministic security limit with an implementable dispatch formulation.

\subsection{Perfect-Foresight Security Benchmark}

The perfect-foresight oracle defines the lowest operational security loss attainable over the full horizon:

\begin{equation} \label{oracle}
  \begin{aligned}
  \mathcal{G}^\text{oracle}: \min_{g_i(t),s(t)} & \quad \sum_{t \in \mathcal{T}} s(t)  \\
  \text { s.t.}  \quad &\text{for all } i \in \mathcal{N}  \text{ and } t \in \mathcal{T}:\\
  (\mu_{it}^-, \mu_{it}^+): \quad & 0 \leq g_i(t) \leq \overline{g}_i, \\
  (\rho_{it}^-, \rho_{it}^+): \quad & -\overline{r}_i \leq g_i(t) - g_i(t-1) \leq \overline{r}_i, \\
  \lambda_t: \quad & d(t) - s(t) = \sum_{i=1}^N g_i(t), \\
  \gamma_t: \quad & s(t) \geq 0,
  \end{aligned}
\end{equation}
where the multipliers $\{\mu_{it}^\pm,\rho_{it}^\pm,\lambda_t,\gamma_t\}$ correspond to the displayed constraints.

Because the oracle optimizes dispatch using the realized net-demand trajectory over the entire horizon, its demand shedding is no greater than that of \emph{any} other feasible dispatch policy. It therefore provides a lower bound on operational security loss. However, such a policy is not implementable under Assumption~\ref{error}, because the decision-maker has only imperfect information about future net demand, with forecast uncertainty generally increasing with the prediction horizon. The oracle therefore serves as a security benchmark against which implementable dispatch policies and procurement designs can be evaluated. The remaining question is whether this benchmark can be attained, or approximated, \emph{without} perfect foresight.

\subsection{How Ramp Products Work: Preserving Remaining Ramp-Up Duration}
\label{sec:positioning}

Ramp products do not create additional physical ramp capability. Instead, they
improve future ramp deliverability by changing the current dispatch position of
the fleet. This mechanism can be seen directly through the remaining ramp-up
duration $l_i(t)$. For a duration-$W$ upward ramp product, the reserve that
resource $i$ can provide is limited by both its ramp rate and its available
capacity headroom:
\begin{equation}
r_i^+(t+W)
\leq
\min\left\{W\overline r_i,\,
\overline g_i-g_i(t)\right\}
=
\overline r_i\min\left\{W,l_i(t)\right\}.
\label{eq:reserve_duration}
\end{equation}
Thus, the maximum $W$-duration upward capability can be reserved from a resource is determined directly by its remaining ramp-up duration. Aggregating \eqref{eq:reserve_duration} across all resources gives
\begin{equation}
\sum_{i=1}^N r_i^+(t+W)
\leq
\sum_{i=1}^N
\overline r_i\min\{W,l_i(t)\}
=
D_t(W).
\label{eq:product_capability_link}
\end{equation}
Therefore, satisfying a duration-$W$ ramp requirement requires the current
dispatch state to preserve sufficient fleet capability $D_t(W)$ at that
duration. If the economic dispatch would otherwise leave too little capability,
the ramp-product constraint changes the energy dispatch so that additional
headroom, and hence additional remaining ramp-up duration, is retained on
selected resources. From this perspective, the primary effect of a ramp product
is \emph{dispatch deliverability}: it limits how much of the fleet's future
ramping capability can be consumed by the current energy dispatch.

The evolution in Equation (\ref{eq:lrecur}) makes this mechanism particularly transparent. Dispatching a resource upward
reduces its remaining ramp-up duration, while dispatching it downward restores that duration. Conventional economic dispatch may repeatedly increase the output of cheaper resources and drive their $l_i(t)$ toward zero. While current demand remains balanced, the fleet becomes progressively less capable of sustaining a future ramp. A ramp product counteracts this effect by requiring the dispatch to retain enough remaining duration to satisfy the selected look-ahead requirement.

\begin{figure}[H]
  \centering
    \includegraphics[width=\columnwidth]{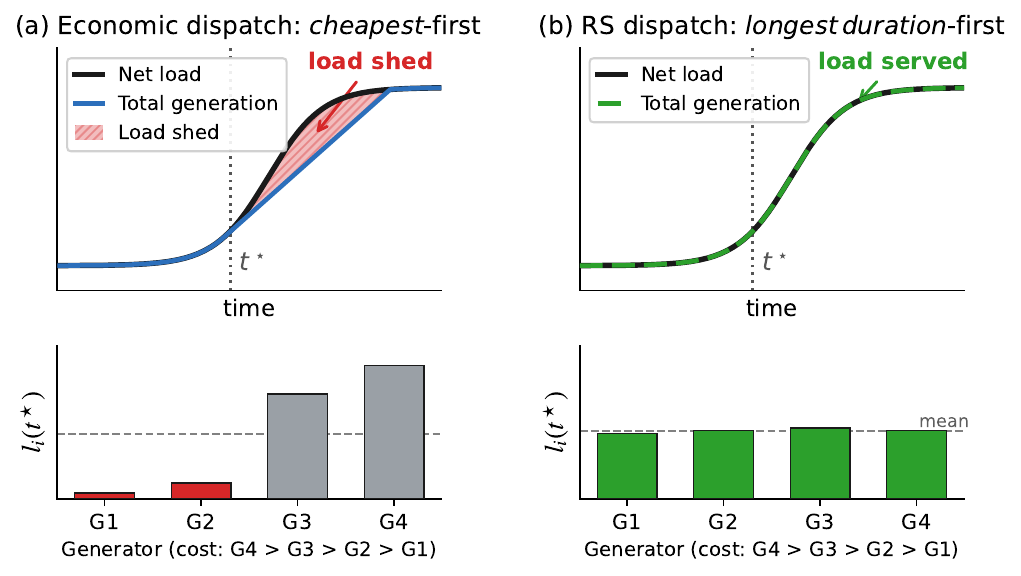}
    \caption{Effect of dispatch policy on the distribution and preservation of remaining ramp-up duration across the fleet.}
    \label{illu}
\end{figure}

Figure~\ref{illu} illustrates this distinction. In panel (a), economic dispatch
concentrates generation on lower-cost resources, leaving some of them close to
their capacity limits and with short remaining ramp-up durations. The same
system-wide generation can be supplied from a different dispatch position, as
illustrated in panel (b), in which remaining ramp-up duration is distributed
more evenly across the fleet. Such repositioning can also occur through
cross-ramping: some resources move downward while others move upward by the
same aggregate amount, leaving total generation unchanged while redistributing
the vector $\{l_i(t)\}_{i=1}^N$.

This interpretation also clarifies why product duration matters. A duration-$W$ product explicitly preserves capability at the selected duration $W$ through its constraint on $D_t(W)$, while its effect on other durations is only indirect. As dispatch evolves, the remaining ramp-up durations also change, so the preserved capability can subsequently be consumed. Ramp-product deliverability is therefore inherently state dependent. This perspective motivates a stronger dispatch principle: rather than preserving capability only at selected product durations, can the fleet be dispatched to preserve remaining ramp-up duration across \emph{all} durations? The next subsection develops such a policy and establishes its fundamental security limit.

\subsection{Deterministic Security Limit and Practical Implementation}
\label{sec:backstop}

The remaining-duration perspective naturally leads to a dispatch policy that preserves ramp capability as effectively as possible over time. When the system requires upward movement, resources with longer remaining ramp-up durations are dispatched first. When downward movement is required, resources with shorter remaining durations are prioritized.  This new dispatch policy is named as \emph{ramp-reserve scarcity} (RS) dispatch in this paper, which is inspired by optimal energy harvesting systems in communication networks \cite{tutuncuoglu2012optimum} and described in Algorithm~\ref{greedyalg} in the Appendix. 

In contrast to ramp products, which preserve capability at selected durations through forward-looking procurement requirements, RS dispatch relies \emph{only} on the current system ramping need and the current remaining-duration state. Theorem \ref{optimal} shows that this policy achieves the minimum possible operational security loss.

\begin{theorem}[Optimality of Ramp-Reserve Scarcity Dispatch]
\label{optimal}
Under Assumptions~\ref{error} and \ref{nocurt}, the operational security loss
of the greedy RS dispatch policy $\mathcal{G}^{\mathrm R}$ in
Algorithm~\ref{greedyalg} is no greater than that of any other feasible
dispatch policy. Hence, $\mathcal{G}^{\mathrm R}$ achieves the same security level as the perfect-foresight oracle $\mathcal{G}^{\mathrm{oracle}}$ defined in \eqref{oracle}.
\end{theorem}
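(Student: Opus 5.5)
The plan is to prove the theorem by induction over the dispatch intervals. The induction hypothesis is a \emph{state-dominance invariant}: at every time $t$, the state reached by $\mathcal{G}^{\mathrm R}$ is at least as good as the state reached by any other feasible policy $\mathcal{G}$, once past shedding is taken into account. Write $P(t)=\sum_i g_i(t)$ and $L_{\le t}$ for the shedding accumulated up to $t$. Under Assumption~\ref{nocurt} I read feasibility as requiring $P(t)\le d(t)$, so a policy can only under-serve and never over-serve. The invariant I aim for is
\begin{equation*}
L^{\mathrm R}_{\le t}\le L^{\mathcal G}_{\le t},\quad
P^{\mathrm R}(t)\ge P^{\mathcal G}(t),\quad
\sum_{i=1}^N \overline r_i\min\{l^{\mathrm R}_i(t),k\}\ \ge\ \sum_{i=1}^N \overline r_i\min\{l^{\mathcal G}_i(t),k\}-\big(L^{\mathcal G}_{\le t}-L^{\mathrm R}_{\le t}\big)
\end{equation*}
for all $k\ge 1$. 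In words, RS may have spent some of its remaining ramp-up duration relative to $\mathcal G$, but never more than the extra demand it has already served. The capability curve \eqref{eq:capcurve} is the natural object because of \eqref{eq:product_capability_link}: $D_t(k)$ is exactly the endpoint-deliverable increase over $k$ intervals.

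The first step is a purely static lemma, proved as a majorization or water-filling statement. Fix a vector $\{l_i(t-1)\}$ and a required aggregate change $\Delta$, with each per-unit move $u_i\in[-1,1]$ and the box constraint $0\le l_i\le\overline l_i$. The lemma says that the RS allocation in Algorithm~\ref{greedyalg} maximizes $\sum_i\overline r_i\min\{l_i(t),k\}$ simultaneously for every $k$ among all allocations with the same total change. For upward moves, RS takes ramp from the resources with the largest $l_i$. For downward moves, RS returns output to the resources with the smallest $l_i$. The proof is an exchange argument: if any allocation moves a short-$l$ unit up while a longer-$l$ unit has unused ramp, swapping the increments weakly increases every term $\min\{l_i,k\}$ in aggregate, because $\min\{\cdot,k\}$ is concave and piecewise linear. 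The downward case is symmetric.

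The second step is the inductive step. Given the invariant at $t-1$, compare the two policies' moves at $t$ in two cases.
\begin{itemize}
\item If $\mathcal G$ serves no more than RS at $t$, the lemma together with monotonicity of $D$ in total output preserves the invariant directly.
\item If $\mathcal G$ serves more than RS at $t$, RS must be ramp-limited. That limit is bounded by $D_{t-1}(1)$, and the invariant at $k=1$ then gives that $\mathcal G$'s ramp is bounded by the same quantity plus the slack already counted. Hence the extra service of $\mathcal G$ is paid for by an equal decrease in the slack term, which preserves all three inequalities.
\end{itemize}
Summing the shedding over $\mathcal T$ then gives $L(\mathcal{G}^{\mathrm R},d)\le L(\mathcal G,d)$ for every feasible $\mathcal G$, in particular for $\mathcal G^{\mathrm{oracle}}$. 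Combined with the oracle's lower-bound property this yields equality. Assumption~\ref{error} enters only through \eqref{firsterror}: RS uses nothing beyond $d(t)$ and the current state, so it is implementable.

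The main obstacle is the case where RS serves \emph{more} load now than a competitor that deliberately sheds early to keep headroom for later. A naive ``RS has more capability for every $k$'' invariant fails there, which is why the invariant above carries the additive slack $L^{\mathcal G}_{\le t}-L^{\mathrm R}_{\le t}$. I first need to check that this slack formulation closes under downward moves. The concern is that RS ramping down on short-$l$ units could, in principle, restore less capability at some large $k$ than a competitor's choice would. I would handle this by using the symmetric ramp assumption $\underline r_i=\overline r_i$ together with the lemma's simultaneous-in-$k$ optimality. If that turns out to be insufficient, I would fall back to an LP-duality proof: use the multipliers in \eqref{oracle} to build a dual certificate whose value equals the RS loss, with $\lambda_t\in\{0,1\}$ supported on the intervals where RS sheds.
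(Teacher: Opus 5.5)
Your primal dominance route differs from the paper's, which builds multipliers for \eqref{oracle} backward in time and checks complementary slackness, and it can be made to work. As written, however, the induction does not close.

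\textbf{The invariant is not inductive.} Your invariant contains $P^{\mathrm R}(t)\ge P^{\mathcal G}(t)$, yet your second case is exactly $P^{\mathcal G}(t)>P^{\mathrm R}(t)$, and there you claim all three inequalities survive; the second cannot. The slack-weakened third inequality does not rule that case out either. At $k=1$ it only gives $P^{\mathcal G}(t)\le P^{\mathrm R}(t)+L^{\mathcal G}_{\le t-2}-L^{\mathrm R}_{\le t-2}$. Concretely, take two resources with $\overline r_1=\overline r_2=1$, $l^{\mathrm R}(t-1)=(2,0)$, $l^{\mathcal G}(t-1)=(1,1)$, equal output at $t-1$, and an accumulated shedding gap of $5$. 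These satisfy all three of your inequalities, yet after a demand jump of $2$ the competitor can add $2$ while RS adds only $1$. The inequalities do hold along actual trajectories under Assumption~\ref{nocurt}; the problem is only that they do not imply themselves one step later.

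\textbf{Three further problems.} First, your static lemma only ranks allocations starting from the same vector $\{l_i(t-1)\}$. The inductive step must compare RS moving from $l^{\mathrm R}(t-1)$ with $\mathcal G$ moving from a different $l^{\mathcal G}(t-1)$, and ``lemma plus monotonicity of $D$'' does not supply that comparison. Second, Assumption~\ref{nocurt} is not what gives $P(t)\le d(t)$; that comes from $s(t)\ge0$. Its role is to make the RS state the clipped water level $l^{\mathrm R}_i(t)=\min\{\overline l_i,\max\{l_i'(t-1),\underline L(t)\}\}$. That, not symmetry of ramp rates, is what settles your open worry about downward moves. Third, your fallback certificate cannot exist. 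Take one resource with $\overline r=1$, $\overline g=10$, $g(0)=0$, and $d=(0.5,2)$. RS is the unique optimum, and complementary slackness forces $\lambda_2=\rho^+_2=1$ and then $\lambda_1=-1$ on the non-shedding interval. This is why the paper sets $\lambda_t=-\min_{i:g_i^*(t)>0}\rho^+_{i,t+1}$ on such intervals.

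\textbf{The missing idea: track reachable output, not headroom.} Let $F_t(k)=P(t)+D_t(k)=\sum_i\min\{g_i(t)+k\overline r_i,\overline g_i\}$ for $k\ge0$. This is nondecreasing in every $g_i(t)$, so early shedding never helps the competitor and no slack term is needed. Prove $F^{\mathrm R}_t(k)\ge F^{\mathcal G}_t(k)$ for all $k\ge0$ by induction, using $\sum_i\overline g_i-F_t(k)=\sum_i\overline r_i\max\{l_i(t)-k,0\}$.

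Any feasible policy satisfies $F^{\mathcal G}_t(k)\le F^{\mathcal G}_{t-1}(k+1)$, because $l^{\mathcal G}_i(t)\ge l^{\mathcal G}_i(t-1)-1$. If RS is ramp-limited at $t$, every unit moves to $l_i'(t-1)$, so $F^{\mathrm R}_t(k)=F^{\mathrm R}_{t-1}(k+1)$. If RS is demand-limited, then $P^{\mathrm R}(t)=d(t)\ge P^{\mathcal G}(t)$, and two sub-cases arise:
\begin{itemize}
\item For $k\ge\underline L(t)$, only units with $l^{\mathrm R}_i(t)=l_i'(t-1)>\underline L(t)$ contribute, which gives $F^{\mathrm R}_t(k)\ge F^{\mathrm R}_{t-1}(k+1)$. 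In this case and the ramp-limited case, the hypothesis at $t-1$ closes the chain.
\item For $k\le\underline L(t)$, each RS unit has either $l^{\mathrm R}_i(t)\ge k$ or $l^{\mathrm R}_i(t)=\overline l_i\ge l^{\mathcal G}_i(t)$. Either way, $\max\{l^{\mathcal G}_i(t)-k,0\}-\max\{l^{\mathrm R}_i(t)-k,0\}\ge l^{\mathcal G}_i(t)-l^{\mathrm R}_i(t)$. The $\overline r_i$-weighted sum of the right side equals $P^{\mathrm R}(t)-P^{\mathcal G}(t)\ge0$.
\end{itemize}

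At $k=0$ this yields $s^{\mathrm R}(t)\le s^{\mathcal G}(t)$ for every $t$. That is stronger than the theorem and fully primal. The paper's dual route certifies only the total loss, but it produces the scarcity multipliers $\lambda_t,\rho^+_{it}$ explicitly.
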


\begin{proof}
See Appendix~\ref{app:proof}.
\end{proof}

Theorem~\ref{optimal} establishes RS dispatch as a deterministic security benchmark. Despite using only current-interval information, greedy RS dispatch achieves the same minimum security loss as the perfect-foresight oracle. Therefore, under the stated assumptions, no ramp-product design can provide a lower security loss over the same event. Ramp products may still offer a more economical solution by selectively preserving capability at anticipated durations, while RS dispatch represents the fundamental security limit.

The greedy RS policy establishes the deterministic security limit in the transmission-unconstrained setting of Theorem~\ref{optimal}. When transmission capacity constraints are considered, however, generators can no longer be dispatched solely according to their remaining ramp-up durations. Their locations in the network also determine how redispatch affects transmission flows. Therefore, the greedy priority rule may not be directly feasible under network congestion. Instead, the RS principle can be incorporated into a conventional network-constrained SCED by adding a penalty on the dispersion of remaining ramp-up durations to the objective function.

To incorporate transmission constraints using the DC power flow model, let $\mathcal{V}$ denote the set of buses, $\mathcal{G}(i)$ the set of generators located at bus $i\in\mathcal{V}$, $H\in\mathbb{R}^{L\times|\mathcal{V}|}$ the power transfer distribution factor (PTDF) matrix, and $\overline{f}\in\mathbb{R}^{L}_{+}$ the vector of transmission line limits. Consider nodal demand $d_i(t)$, load shedding $s_i(t)\geq0$, the net real-power injection $p_i(t)$, and let $p(t)=[p_1(t),\ldots,p_{|\mathcal{V}|}(t)]^\top$. The resulting RS-SCED formulation $\mathcal{G}^{\mathrm{R}}_{{SCED}_t}$ is
\begin{equation}
\begin{aligned}
\min_{\substack{\{g_j(t)\},\\ \{s_i(t)\},\,\eta}} 
 \;&
\sum_{j=1}^{N} f_j\big(g_j(t)\big)
+\rho_s\sum_{i\in\mathcal{V}}s_i(t)
+\beta\sum_{j=1}^{N}\big(l_j(t)-\eta\big)^2
\\
\text{s.t.}\quad
&
\underline g_j
\leq g_j(t)
\leq \overline g_j,
\quad \forall j,
\\
&
-\overline r_j
\leq g_j(t)-g_j(t-1)
\leq \overline r_j,
\quad \forall j,
\\
&
p_i(t) = \sum_{j\in\mathcal{G}(i)}g_j(t)
-d_i(t)+s_i(t),
\quad \forall i\in\mathcal{V},
\\
&
\sum_{i\in\mathcal{V}}p_i(t)=0, \quad s_i(t)\geq0,
\quad \forall i\in\mathcal{V},
\\
&
-\overline f
\leq Hp(t)
\leq \overline f.
\end{aligned}
\label{eq:sced_ramp_equalize_qp}
\end{equation}

Different from conventional SCED objective function, the third term introduces the RS dispatch principle. For a given set of remaining ramp-up durations, the optimal center is $\eta^\star = \frac{1}{N}\sum_{j=1}^{N}l_j(t)$, so that $\sum_{j=1}^{N}(l_j(t)-\eta)^2$ penalizes the dispersion of remaining ramp-up durations around their fleet-wide mean. A larger value of $\beta$ therefore places greater emphasis on avoiding excessive depletion of the remaining ramp-up duration of individual resources.

Unlike the greedy RS policy, the formulation in \eqref{eq:sced_ramp_equalize_qp} does not impose a fixed dispatch priority. Instead, the RS penalty is optimized jointly with generation cost subject to generator limits, ramp-rate limits, nodal power balance, and transmission constraints. Thus, when congestion prevents a resource with a longer remaining ramp-up duration from following the greedy priority, SCED selects an alternative network-feasible redispatch while still favoring preservation of the overall remaining-duration profile. Setting $\beta=0$ recovers conventional SCED, whereas $\beta>0$ introduces the duration-preserving RS principle while retaining a convex quadratic formulation.

Similarly, the same dispersion penalty, $\beta\sum_{j=1}^{N}\big(l_j(t)-\eta\big)^2$,
can be added to the LAED objective while retaining its look-ahead, ramping, and transmission constraints. This provides a network-constrained implementation of the RS principle for both real-time SCED and look-ahead dispatch.

\section{Case Studies}
\label{sec:case}

The case studies follow the operational sequence established by the proposed framework. First, the ramp adequacy margin is used to diagnose the insufficient-duration set and provide advance warning of ramp scarcity. Second, the screening results are used to evaluate candidate product durations, duration portfolios, and post-procurement deliverability. Finally, when procurement does not fully address the diagnosed exposure, the temporal margin is used to evaluate the timing and effectiveness of emergency RS
dispatch.

The 10-generator system provides a controlled setting for examining both screening axes under perfect forecasts and forecast uncertainty. The 2751-bus synthetic Texas system then demonstrates large-scale screening, product evaluation, and emergency RS implementation under transmission constraints.

\subsection{Ramp Adequacy Screening and Early Warning (10-Generator System)}

\emph{Setup.}
The 10-generator system contains 2561~MW of installed capacity and 120~MW of aggregate upward ramp capability per 5-minute interval. We consider a 4-hour sustained ramp in which net demand increases from  1777~MW to 2373~MW. This subsection assumes perfect forecasts to isolate the screening results.

\emph{Early Diagnosis of Ramp Scarcity.}
At the beginning of the ramp window, the fixed-time duration screen is nonnegative over all examined durations, indicating that the fleet can support the projected ramp from its initial dispatch state. As economic dispatch progresses, however, lower-cost resources move toward their capacity limits and lose remaining ramp-up duration. By interval 17, or 85~minutes into the event, the updated screen identifies a 60--85-minute insufficient-duration set. Baseline load shedding does not begin until interval 26, so the screen provides a 45-minute advance indication of the emerging ramp shortage. Figure~\ref{fig:duration}(a) shows the capability and requirement curves at the diagnosis time, while Figs.~\ref{fig:duration}(b) and (c) track the 10- and 60-minute temporal margins.

\emph{Implications for Product-duration Selection.}
The diagnosed set also explains the performance of different ramp-product durations. Because the exposure begins near 60~minutes, the 5- and 10-minute products do not materially change the outcome, and each case retains 9.86~MWh of shedding. The 60-minute product is aligned with the lower edge of the exposed set and reduces shedding to 3.86~MWh. RS dispatch eliminates shedding to numerical tolerance and serves as the theoretical security benchmark established by Theorem~\ref{optimal}.

A system that appears adequate at the beginning of a ramp can become duration-deficient as economic dispatch consumes remaining ramp capability. Ramp adequacy should therefore be re-screened as the dispatch state evolves.

\begin{figure*}[t]
\centering
\includegraphics[width=\textwidth]{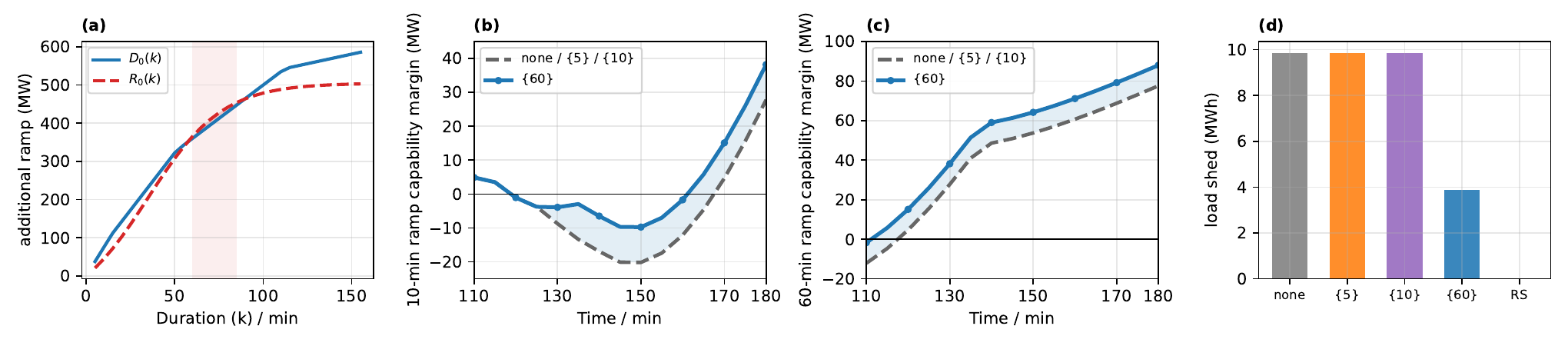}
\caption{Ramp adequacy screening and product comparison in the
10-generator system under perfect forecasts. (a) Fixed-time screen at time 0, (b) The 10-minute and (c) 60-minute temporal ramp capability reserve margins under different ramp product durations. (d) Load
shedding under different ramp product durations and RS dispatch.}
\label{fig:duration}
\end{figure*}

\subsection{Screening-Guided Product and Dispatch Design under Forecast Uncertainty (10-Generator System)}

We next introduce forecast uncertainty to examine how ramp adequacy screening can guide product and dispatch design. Forecast errors are Gaussian, with the 60-minute-ahead mean absolute error set to 5\% of mean demand and error dispersion increasing linearly with the forecast horizon. The 90\% forecast interval defines the upward uncertainty allowance $\delta(\tau)$. We conduct 30 Monte Carlo trials using common random numbers across all product configurations, with energy and reserve jointly re-cleared at every 5-minute interval.

\emph{Ramp Reserve Product Design.}
Forecast uncertainty changes both the magnitude and timing of the anticipated ramp requirement, so a product selected from a deterministic insufficient-duration set may not cover all realizations. The 30-minute product results in 2.11~MWh of mean shedding with a 2.27-MWh standard deviation, while the 60-minute product reduces these values to 0.57~MWh and 1.00~MWh, respectively. Both the ${30,60}$-minute and ${10,30,60}$-minute portfolios eliminate shedding in all 30 trials. These results show that when the insufficient-duration set shifts across forecast realizations, a portfolio spanning several relevant durations can provide more robust coverage than a single-duration product.

\begin{figure}[H]
\centering
\includegraphics[width=0.85\columnwidth]{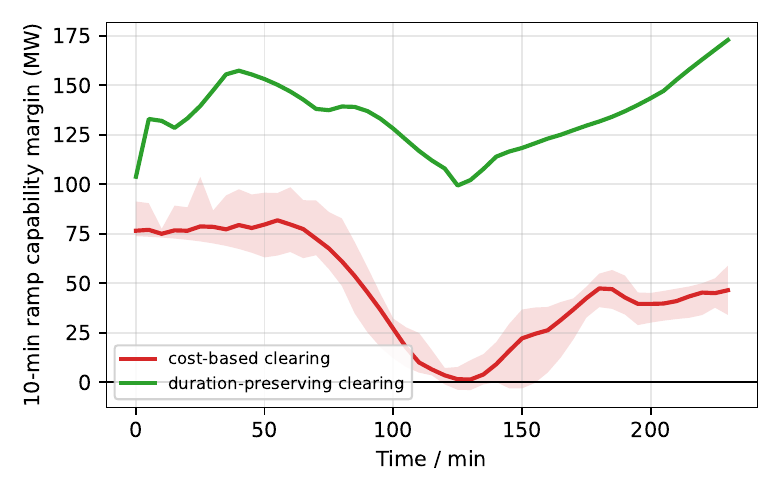}
\caption{Post-clearing dispatch-state sensitivity under the same 60-minute
product, 5\% forecast-error model.}
\label{fig:degradation}
\end{figure}

\emph{The Role of Dispatch Policy.} After ramp reserve procurements accounts for forecast uncertainty, dispatch positioning affects retained ramp capability. Figure~\ref{fig:degradation} compares two clearing policies under the same 60-minute product and forecast-error model. The cost-based case uses the standard production and load-shedding objective, but the duration-preserving case augments this objective with the remaining-duration dispersion penalty in \eqref{eq:sced_ramp_equalize_qp}, using $\beta=200$. Sensitivity tests at $\beta \in \{50, 100, 200, 400\}$ retain zero shedding in 10-generator cases and approximately 68\% shedding reduction in the Texas case with a production cost increase for $\beta=50\rightarrow 400$ of 0.7\% and 1.4\% for the 10-generator and Texas cases respectively.

The 10-minute temporal margin is used to evaluate the short-duration capability retained after clearing. Under cost-based clearing, mean shedding is 0.57~MWh, the minimum mean margin falls to 1.4~MW, and the 10th-percentile margin reaches $-3.9$~MW. Under duration-preserving clearing, mean shedding is effectively zero, while both the minimum mean and 10th-percentile margins remain at 99.4~MW.

A ramp-product requirement alone does not determine the capability retained after clearing. Ramp adequacy screening can therefore be used not only to select product durations, but also to evaluate whether the resulting dispatch policy preserves ramp capability at other relevant durations.

\subsection{Large-Scale Screening and Product Evaluation (Synthetic Texas Grid)}
\label{sec:texas}

The large-scale study uses the 2751-bus synthetic Texas system
\cite{birchfield2016grid}, with 877 generators, 5344 branches, and 96 15-minute intervals based on ERCOT's 2023 peak-load day. Demand is scaled by 1.135, solar capacity by 1.5, and transmission capacity by 1.5. Generator ramp rates are multiplied by factor 0.04 to create a ramp-constrained operating condition. Peak net demand is 77.6~GW, while conventional generation capacity is 77.4~GW. The dispatch model retains the full DC network constraints, which are equivalent to a PTDF formulation.  

\emph{Duration-Aware Ramp Adequacy Screening.}
Figure~\ref{fig:texas-screening}(a) fixes the baseline dispatch state at 13:00, the beginning of the selected afternoon ramp window. The requirement curve first exceeds the fleet capability curve at 360~minutes, identifying a multi-hour exposure and a 48.8-MWh lower bound on the corresponding security loss. This diagnosis indicates that the shortage is not primarily a short-duration ramp-rate problem. Instead, the committed fleet lacks sufficient sustained capability from its current dispatch position.

Figure~\ref{fig:texas-screening}(b) complements this diagnosis by tracking the 60-minute margin over time. Without a product, the margin first becomes negative at 17:45 and reaches a minimum of $-1713$~MW. The 60-minute product raises the minimum to $-1259$~MW, which improves the monitored margin, but its duration remains well below the 360-minute exposure identified by the fixed-time screen.

\emph{Implications for Product-duration Selection.}
The procurement rows of Table~\ref{tab:texas} confirm the screening result. The 30- and 60-minute products reduce shedding by 19\% and 18\%, respectively, while the $\{30,60\}$-minute portfolio reduces shedding by 28\%. These products provide partial value by improving dispatch positioning over shorter horizons, but none covers the full multi-hour exposure. The 120-minute product increases shedding by approximately 5\% because its repositioning is not sufficiently sustained relative to the diagnosed ramp event.

The case therefore illustrates an important product-design implication. Adding a longer product to an existing menu does not automatically improve security. Candidate durations should be selected according to the location of the insufficient-duration set and evaluated through their complete temporal-margin trajectories. When the diagnosed set extends beyond all available product durations, the operator may need a longer-duration portfolio, additional flexible resources, or a dispatch-side emergency response.

\begin{figure}[t]
\centering
\includegraphics[width=0.43\textwidth]{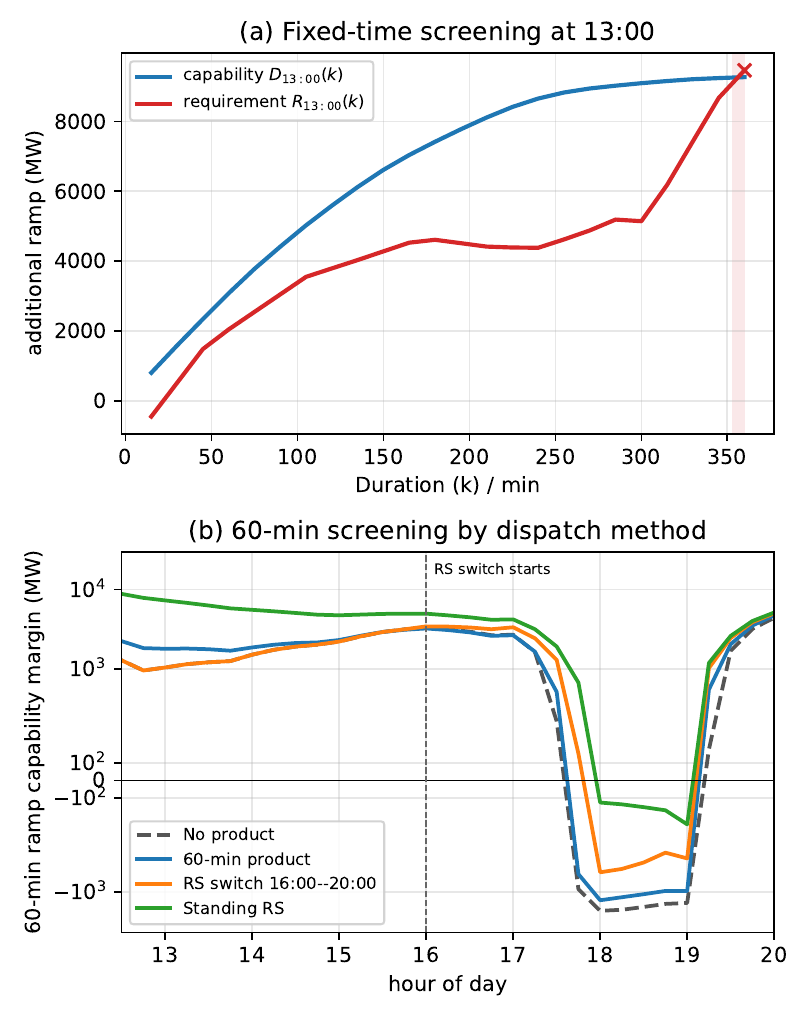}
\caption{Ramp adequacy screening on the Synthetic Texas grid.}
\label{fig:texas-screening}
\end{figure}

\subsection{Screening-Guided Emergency RS Dispatch (Synthetic Texas Grid)}

\emph{Emergency RS deployment.}
When ramp-product procurement does not fully address the diagnosed exposure, RS dispatch can serve as an emergency operating policy. Theorem~\ref{optimal} establishes RS dispatch as the minimum-security-loss policy in the transmission-unconstrained setting, motivating targeted activation rather than continuous use. For the synthetic Texas system, RS is implemented through the network-constrained RS-SCED formulation in \eqref{eq:sced_ramp_equalize_qp}, with $\beta=200$, so that remaining-duration preservation is optimized jointly with generator, nodal balance, and transmission constraints.

\emph{Screening-guided Emergency Activation.}
The fixed-duration temporal margin provides a direct signal for evaluating activation timing. In Fig.~\ref{fig:texas-screening}(b), switching to RS over $[16{:}00,20{:}00]$ before the baseline 60-minute margin becomes negative delays the first negative crossing from 17:45 to 18:00 and improves the minimum margin from $-1713$ to $-559$~MW. In contrast, the later $[18{:}00,22{:}00]$ activation begins after ramp scarcity has developed and provides substantially less protection.

Table~\ref{tab:texas} shows the corresponding security and cost outcomes. The $[16{:}00,20{:}00]$ activation reduces shedding by 68\%, from 2275 to 723~MWh, with a 2.9\% increase in production cost. The earlier and longer $[14{:}00,22{:}00]$ activation further reduces shedding to 471~MWh, whereas the late $[18{:}00,22{:}00]$ activation leaves 1929~MWh of shedding. Standing RS reduces shedding further to 322~MWh, but at the cost of continuously prioritizing ramp preservation.

\begin{table}[H]
\caption{Screening-Guided Ramp-Product Procurement and Emergency RS Dispatch
on the Synthetic Texas Grid}
\centering
\label{tab:texas}
\begin{threeparttable}
\setlength{\tabcolsep}{4pt}
\begin{tabular}{l|ccc}
\toprule
\textbf{Configuration}
& \textbf{Shed (MWh)}
& \textbf{Prod.\ (\$M)}
& \textbf{Total\tnote{*} (\$M)}
\\
\midrule
\multicolumn{4}{l}{\textit{Ramp-product procurement}}\\
No product, no switch       & 2275 & \textbf{10.76} & 33.51 \\
Product ${30}$ min        & 1838 & 10.79 & 29.17 \\
Product ${60}$ min        & 1858 & 10.82 & 29.40 \\
Portfolio ${30,60}$ min   & \textbf{1633} & 10.82 & \textbf{27.15} \\
Product ${120}$ min       & 2397 & 10.97 & 34.94 \\
\midrule
\multicolumn{4}{l}{\textit{Emergency RS dispatch}}\\
Switch $[14{:}00,18{:}00]$  & 756  & 11.28 & 18.84 \\
Switch $[16{:}00,20{:}00]$  & \textbf{723} & 11.08 & \textbf{18.30} \\
Switch $[18{:}00,22{:}00]$  & 1929 & \textbf{11.00} & 30.29 \\
\bottomrule
\end{tabular}
\begin{tablenotes}
\item[*] Total emergency cost $=$ production cost $+$ VOLL $\times$ shed load.
\end{tablenotes}
\end{threeparttable}
\end{table}

The experiments use predetermined activation windows rather than an automated margin threshold, but they demonstrate the value of screening-guided timing. Activating RS before the temporal margin becomes deeply negative preserves substantially more future ramp capability than attempting to restore the fleet after scarcity has developed. Residual shedding under RS-SCED reflects transmission congestion and system capacity limitations that are not fully captured by the fleet-level ramp adequacy margin.

\section{Discussion: Operation, Market, and Regulatory Implications}
\label{sec:disc}

Duration-aware ramp adequacy screening provides a state-dependent basis for ramp-product design and operation. The insufficient-duration band $\mathcal{B}_t$ identifies which ramp durations require additional protection under the current commitment, dispatch, and anticipated net-demand conditions. Historical and scenario-based distributions of $\mathcal{B}_t$ can therefore inform the standard product menu, while operational screening can identify which durations are relevant for a particular operating condition. When the exposed band spans multiple durations or varies across forecast scenarios, a portfolio may provide more robust protection than a single ramp product.

Screening also provides a post-clearing diagnostic because ramp capability evolves with the energy dispatch. Tracking the temporal margin $M_t(\bar{k})$ reveals whether the intended capability remains available after procurement and can provide an operational signal for emergency RS activation when product procurement is insufficient. Product certification should therefore account for the current commitment and dispatch state, including ramp rate, available headroom, and remaining ramp-up duration, rather than relying only on nameplate capability.

These applications raise several market-design considerations. Existing ramp-product settlements can compensate cleared capability and lost-opportunity costs \cite{chen2026ramping,looney2026lost}, while emergency RS activation may require transparent triggering and release criteria and ex post reporting. In addition, the current screening metric measures aggregate fleet capability and does not explicitly represent transmission congestion. Extending the framework to zonal or nodal screening and developing corresponding multi-duration procurement mechanisms remain important directions for future work.

\section{Conclusion}

This paper develops a duration-aware ramp adequacy screening framework based on the margin $M_t(k)=D_t(k)-R_t(k)$, which compares fleet ramp capability with the anticipated ramp requirement across duration. A negative margin identifies an insufficient-duration band in which the committed fleet lacks adequate ramp capability, providing a direct basis for ramp-product duration selection. Tracking the same margin over time further reveals how ramp adequacy evolves with dispatch and provides a metric for post-procurement evaluation.

The remaining-duration representation further reveals how ramp products improve security through dispatch positioning. Building on this insight, we develop ramp-reserve scarcity dispatch and show that, in the transmission-unconstrained setting, it achieves the same minimum operational security loss as the perfect-foresight benchmark. The case studies demonstrate that ramp adequacy screening can identify emerging ramp scarcity in advance, distinguish the value of different product durations and portfolios, quantify dispatch-state effects after procurement, and guide the timing of emergency RS activation.

Future work will extend the framework to zonal and nodal screening, examine ramp product demand curve design and explore pricing mechanisms for multi-duration ramp procurement and emergency dispatch.

\section*{Acknowledgment}
{The authors are grateful for many discussions with Congcong Wang, Long Zhao and Bin Huang from MISO. We also benefited from helpful comments and
critiques given by Kathleen Spees, Andrew Levitt, Sarah Sofia, and Natalie Northrup from the Brattle Group.}

\section*{Disclaimer}
{The views expressed in this paper are the opinion of the
authors and do not reflect the views of PJM Interconnection,
L.L.C. or its Board of Managers, of which Le Xie is a member.}

\appendices

\section{Greedy Algorithm and Proof of Theorem~\ref{optimal}}
\label{app:proof}

Using the remaining ramp-up duration in Definition~\ref{defin2}, the oracle problem \eqref{oracle} can be equivalently expressed in the remaining-duration
domain. Let
\begin{equation}
\label{eq:netramp}
\delta'(t)=\delta(t)+s(t-1),
\end{equation}
where $\delta(t)=d(t)-d(t-1)$ under Assumption~\ref{error}. The equivalent
oracle formulation $\mathcal{G}^{\mathrm{oracle}}_{\mathrm{eq}}$ is
\begin{equation}
\label{oracle2}
\begin{aligned}
& \min_{\{l_i(t)\},\,\{s(t)\}}
\quad  \sum_{t\in\mathcal T}s(t)\\
\text{s.t.}\quad
&0\leq l_i(t)\leq\overline l_i, \quad -1\leq l_i(t-1)-l_i(t)\leq1,\\
&\delta'(t)-s(t)
=\sum_{i=1}^{N}\overline r_i
\big[l_i(t-1)-l_i(t)\big],\\
&s(t)\geq0,
\qquad \forall i\in\mathcal N,\;t\in\mathcal T.
\end{aligned}
\end{equation}

The paper focuses on upward ramp scarcity and adopts the following condition.

\begin{assumption}[Non-binding Ramp-Down Capability]
\label{nocurt}
The system has enough downward flexibility over the intervals considered such that the ramp down constraints in the RS dispatch do not bind.
\end{assumption}

This assumption isolates the upward ramp-scarcity problem considered in Theorem~\ref{optimal}. The greedy algorithm operates directly on the remaining-duration state. Algorithm~\ref{greedyalg} allocates the required system ramp according to the RS priority rule described in Section~\ref{sec:backstop}, and its implementation requires running the algorithm $\forall t \in \mathcal{T}$.

\begin{algorithm}[h]
  \caption{Greedy Algorithm for Ramp-Reserve Scarcity Dispatch $\mathcal{G}^{\mathrm R}$ } \label{greedyalg}
  \SetAlgoLined
  \KwIn{Net ramping signal $\delta'(t)$; durations $\{l_i(t-1)\}_{i \in \mathcal{N}}$; parameters $\{\overline{r}_i, \overline{l}_i\}_{i \in \mathcal{N}}$}

\textbf{Candidate duration levels:}

  \For{\( i \in \mathcal{N} \)}{
    \( l_i'(t-1) = \max \{l_i(t-1) - 1, 0 \} \)  (ramp up)\\
    \( l_i''(t-1) = \min \{l_i(t-1) + 1, \overline{l}_i \} \) (ramp down)
  }

\textbf{Ordering:} Sort the initial and candidate remaining durations in descending order: \\
\( \{L_1(t),\dots,L_{3N}(t)\}=\text{sort}_\text{desc} \{l_{i}(t-1),l_i'(t-1),l_i''(t-1)\}_{i \in \mathcal{N}} \)

\textbf{Target duration:}
Find $[L_{j}(t),L_{j-1}(t)]$ whose total output changes bracket $\delta'(t)$ and interpolate to obtain $\underline L(t)$. If upward capability is insufficient, $\underline L(t)=\min_i l_i'(t-1)$ and shortfall is $s(t)$.

  \For{\( i \in \mathcal{N} \)}{
  \eIf{\( l_i(t-1) \geq \underline{L}(t) \)}{
    \( u_i(t) = \min \{ l_i(t-1) - \underline{L}(t), 1 \} \)
  }{
    \( u_i(t) = -\min \{ \underline{L}(t) - l_i(t-1), 1,  \overline{l}_i - l_i(t-1)\} \)
  }}
\textbf{State update:}
$l_i(t) = l_i(t-1)-u_i(t)$,
$s(t)   = \delta'(t) -\sum_{i \in \mathcal{N}} \overline{r}_i (l_i(t-1) - l_i(t))$

\KwOut{\( \{l_i(t), u_i(t)\}_{i \in \mathcal{N}}, s(t) \)}
\end{algorithm}

We exhibit, for the primal solution $\{l_i^*(t),u_i^*(t),s^*(t)\}$ returned by Algorithm~\ref{greedyalg}, a feasible dual solution of the oracle dispatch~\eqref{oracle} that satisfies complementary slackness; strong duality of the linear program then yields optimality.

\emph{Primal feasibility.} The update rules satisfy the power-capacity~\eqref{pc} and ramp~\eqref{rc} constraints. Under Assumption~\ref{nocurt}, the target-selection rule meets demand whenever feasible and assigns any shortfall to $s^*(t)$, ensuring power balance~\eqref{softpower} and $s^*(t)\ge0$. Hence the greedy solution is feasible for $\mathcal{G}^{\mathrm{oracle}}$.

\emph{Dual problem.} We seek a feasible dual point with $\rho_{it}^-=0$ (consistent with Assumption~\ref{nocurt}). For such multipliers, the dual objective is $\sum_t \lambda_t d(t) - \sum_{i,t}(\mu_{it}^+\overline g_i + \rho_{it}^+\overline r_i)-\sum_i\rho_{i1}^+g_i(0)$, with per-generator stationarity $-\mu_{it}^-+\mu_{it}^++\rho_{it}^+-\rho_{i,t+1}^+=\lambda_t$, the shedding condition $\lambda_t+\gamma_t=1$, nonnegativity of $\mu_{it}^\pm,\rho_{it}^\pm,\gamma_t$, and the boundary convention $\rho_{i,T+1}^\pm=0$. The balance multipliers $\lambda_t$ are unrestricted in sign.

\emph{Dual construction.} Let $\Xi^*(t)=\underline L^*(t)-\min_i\{l_i(t-1),l_i'(t-1),l_i''(t-1)\}$ measure operational slack. A value $\Xi^*(t)=0$ means that every generator is at its ramp-up or capacity limit, whereas $\Xi^*(t)>0$ implies $\underline L^*(t)>0$ and $s^*(t)=0$. Constructing the duals backward from $t=T$ gives
\begin{equation*}
\begin{aligned}
\lambda_t&=\begin{cases}1,&\Xi^*(t)=0,\\ -\min\limits_{i:g_i^*(t)>0}\rho_{i,t+1}^+, & \Xi^*(t)>0,\end{cases}\\[2pt]
\rho_{it}^+&=\begin{cases}\max\{0,\lambda_t+\rho_{i,t+1}^+\},&u_i^*(t)=1,\\ 0,&\text{otherwise,}\end{cases}\\[2pt]
\end{aligned}
\end{equation*}
For any $g_i^*(t)=0$, $\mu_{it}^-=\max\{0, -\lambda_t - \rho_{i,t+1}^+\}$ and is 0 otherwise. If all outputs $g^*_i(t)$ are zero and $\Xi^*(t)>0$, use $\lambda_t=-\max_i\rho_{i,t+1}^+$ instead of the empty minimum. Set $\mu_{it}^+=1+\rho_{i,t+1}^+$ when $g_i^*(t)=\overline g_i$, $u_i^*(t)<1$, and $\Xi^*(t)=0$ (and $0$ otherwise).

\emph{Feasibility and complementary slackness.} Under Assumption~\ref{nocurt}, the greedy update is
\[
l_i^*(t)=\min\{\overline l_i,\max\{l_i'(t-1),\underline L^*(t)\}\}.
\]
We prove by backward induction, that $l_i^*(t-1)\le l_j^*(t-1)$ implies $\rho_{it}^+\le\rho_{jt}^+.$. This holds at $t=T+1$ because the terminal multipliers are zero. Suppose it holds at $t+1$ and $l_i^*(t-1)\le l_j^*(t-1)$. If $\rho_{it}^+=0$, the conclusion follows from nonnegativity. Otherwise $u_i^*(t)=1$, so
\[
l_i'(t-1)=l_i^*(t-1)-1\ge\underline L^*(t).
\]
Consequently $l_j'(t-1)\ge l_i'(t-1)\ge\underline L^*(t)$, and the greedy update gives $u_j^*(t)=1$ and $l_i^*(t)\le l_j^*(t)$. The inductive hypothesis gives $\rho_{i,t+1}^+\le\rho_{j,t+1}^+$; the ramp-multiplier formula with the same $\lambda_t$ therefore gives $\rho_{it}^+\le\rho_{jt}^+$.

When $\Xi^*(t)>0$, positive-output generators have $l_i^*(t)\ge\underline L^*(t)$, while zero-output generators have $l_i^*(t)\le\underline L^*(t)$. The ordering and the definition of $\lambda_t$ consequently give $\lambda_t+\rho_{i,t+1}^+\ge0 \ \text{if }g_i^*(t)>0$ and $\lambda_t+\rho_{i,t+1}^+\le0 \ \text{if }g_i^*(t)=0.$ A positive-output generator with $u_i^*(t)<1$ has $l_i^*(t)=\underline L^*(t)$, so its next-period multiplier attains the minimum and $\lambda_t+\rho_{i,t+1}^+=0$. Thus a positive difference is assigned to a binding upward ramp limit and a negative difference to a binding lower output bound. This verifies stationarity.

When $\Xi^*(t)=0$, every generator is at its ramp-up or capacity limit. If $u_i^*(t)=1$, then $\rho_{it}^+=1+\rho_{i,t+1}^+$. Otherwise $g_i^*(t)=\overline g_i$, which implies $\rho_{i,t+1}^+=0$ because the next upward ramp constraint is nonbinding for a generator already at capacity; hence $\mu_{it}^+=1$. Stationarity follows in both cases.

All inequality multipliers are nonnegative, and $\lambda_t\le1$ gives $\gamma_t\ge0$. Every nonzero bound multiplier is assigned to a binding constraint. Finally, $\Xi^*(t)=0$ gives $\gamma_t=0$, whereas $\Xi^*(t)>0$ gives $s^*(t)=0$, so $\gamma_t s^*(t)=0$. The primal and dual solutions are feasible and satisfy complementary slackness. Therefore, Algorithm~\ref{greedyalg} attains the minimum operational security loss. \hfill$\blacksquare$

\bibliographystyle{IEEEtran}
\bibliography{ref.bib}

\end{document}